\definecolor{mygreen}{RGB}{36, 200, 100}
\definecolor{myyellow}{RGB}{220, 220, 30}
\newcommand{\mathcommand}[2]{\newcommand{#1}{\ensuremath{#2}\xspace}}
\mathcommand{\FO}{\textsc{FO}}
\mathcommand{\Var}{V}
\mathcommand{\FOtwo}{\FO^2}
\mathcommand{\FOthree}{\FO^3}
\mathcommand{\MSO}{\textsc{MSO}}
\NewDocumentCommand{\FOeq}{O{}}{\ensuremath{\FO_{#1}[\sim]}\xspace}
\NewDocumentCommand{\FOtwoeq}{O{}}{\ensuremath{\FOtwo_{#1}[\sim]}\xspace}
\NewDocumentCommand{\FOtwoall}{}{\ensuremath{\FO^{2}[\sim,<,\plusone]}\xspace}
\NewDocumentCommand{\FOall}{}{\ensuremath{\FO[\sim,<,\plusone]}\xspace}
\NewDocumentCommand{\FOtwoeqord}{}{\ensuremath{\FO^{2}[\sim,<]}\xspace}
\NewDocumentCommand{\FOtwoeqplusone}{}{\ensuremath{\FO^{2}[\sim,\plusone]}\xspace}
\NewDocumentCommand{\FOthreeeqplusone}{}{\ensuremath{\FO^{3}[\sim,\plusone]}\xspace}
\NewDocumentCommand{\MSOordeq}{O{}}{\ensuremath{\MSO^{#1}[\lesssim]}\xspace}
\mathcommand{\plusone}{+1}
\NewDocumentCommand{\PARTSYNTH}{O{\FOeq}}{\ensuremath{\textsc{PartSynth}(#1)}\xspace}
\NewDocumentCommand{\SHAREDSYNTH}{O{\FOeq}}{\ensuremath{\textsc{SharedSynth}(#1)}\xspace}
\NewDocumentCommand{\foeq}{O{k}}{\ensuremath{\equiv^{\FO}_{#1}}\xspace}
\NewDocumentCommand{\fotwoeq}{O{k}}{\ensuremath{\equiv^{\FOtwo}_{#1}}\xspace}   
\NewDocumentCommand{\msoeq}{O{k}}{\ensuremath{\equiv^{\MSO}_{#1}}\xspace}
\NewDocumentCommand{\msoordeqeq}{O{k}}{\ensuremath{\equiv^{\MSOordeq}_{#1}}\xspace}
\mathcommand{\PTIME}{\textsc{PTime}}
\mathcommand{\LOGSPACE}{\textsc{LogSpace}}
\mathcommand{\PSPACE}{\textsc{PSpace}}
\mathcommand{\ACO}{\textsc{AC}\ensuremath{^0}}
\mathcommand{\Alp}{A}
\mathcommand{\Proc}{\mathbb{P}}
\newcommand{\Sys}{System\xspace}
\newcommand{\Envi}{Environment\xspace}
\mathcommand{\sig}{A}
\mathcommand{\sigE}{\sig_E}
\mathcommand{\sigS}{\sig_S}
\newcommand{\sys}{\mathrm{S}}
\newcommand{\env}{\mathrm{E}}
\mathcommand{\AlpE}{\Alp_\env}
\mathcommand{\AlpS}{\Alp_\sys}
\mathcommand{\ProcE}{\Proc_\env}
\mathcommand{\ProcS}{\Proc_\sys}
\mathcommand{\ProcM}{\Proc_{\sys\env}}
\NewDocumentCommand{\minind}{O{\formule}}{\ensuremath{n^\text{min}_\text{E}}\xspace}
\NewDocumentCommand{\cut}{O{\formule}}{\ensuremath{n^\text{min}_\text{S}}\xspace}
\mathcommand{\config}{C}
\mathcommand{\configinit}{C_0}
\mathcommand{\pos}{l}
\mathcommand{\posinit}{l_0}
\mathcommand{\posbis}{\pos'}
\mathcommand{\spel}{\hat \pos}
\NewDocumentCommand{\afterE}{O{\pos}}{E^*(#1)}
\NewDocumentCommand{\succE}{O{\pos}}{E^{+1}(#1)}
\mathcommand{\conf}{C}
\mathcommand{\confbis}{\widehat\conf}
\mathcommand{\confS}{\conf_\text{S}}
\mathcommand{\confbisS}{\confbis_\text{S}}
\mathcommand{\confE}{\conf_\text{E}}
\mathcommand{\confbisE}{\confbis_\text{E}}
\NewDocumentCommand{\numS}{O{\pos}}{\ensuremath{\confS(#1)}\xspace}
\NewDocumentCommand{\numE}{O{\pos}}{\ensuremath{\confE(#1)}\xspace}
\NewDocumentCommand{\numafterE}{O{\pos}}{\confE^*(#1)}
\NewDocumentCommand{\pot}{O{\pos}}{\rho(#1)}
\mathcommand{\victory}{\mathcal F}
\mathcommand{\game}{\mathfrak G_{\formule}}
\mathcommand{\vocabE}{\vocab_E}
\mathcommand{\vocabS}{\vocab_S}
\newcommand{\nE}{\ensuremath{n_E}\xspace}
\newcommand{\nS}{\ensuremath{n_S}\xspace}
\NewDocumentCommand{\fE}{O{k}}{\ensuremath{f_E(#1)}\xspace}
\NewDocumentCommand{\fS}{O{k}O{\nE}}{\ensuremath{f_S(#1,#2)}\xspace}
\mathcommand{\States}{\mathcal Q}
\mathcommand{\state}{q}
\mathcommand{\statebis}{\state'} 
\mathcommand{\istate}{\state_0}
\mathcommand{\stateone}{\state_1}
\mathcommand{\statetwo}{\state_2}
\mathcommand{\hstate}{\state_h}
\mathcommand{\Trans}{\mathcal T}
\mathcommand{\trans}{t}
\mathcommand{\transzero}{\trans_0}
\mathcommand{\transone}{\trans_1}
\mathcommand{\transtwo}{\trans_2}
\mathcommand{\transthree}{\trans_3}
\mathcommand{\mm}{M}
\mathcommand{\formmm}{\formule_{\mm}}
\NewDocumentCommand{\transition}{O{\trans}O{\state}O{\state}O{\cnt++}}{\ensuremath{#1:#2\xrightarrow{#4}#3}\xspace}
\newcommand{\strinc}{++}
\newcommand{\strdec}{--}
\newcommand{\strzero}{==0}
\NewDocumentCommand{\Transinc}{O{i}}{\ensuremath{\Trans^{\strinc}_{#1}}\xspace}
\NewDocumentCommand{\Transdec}{O{i}}{\ensuremath{\Trans^{\strdec}_{#1}}\xspace}
\NewDocumentCommand{\Transzero}{O{i}}{\ensuremath{\Trans^{\strzero}_{#1}}\xspace}
\NewDocumentCommand{\good}{O{}O{}O{x}}{\ensuremath{\text{Valid}_{#1}^{#2}(#3)}\xspace}
\NewDocumentCommand{\goode}{O{}O{x}}{\good[E][#1][#2]}
\NewDocumentCommand{\goods}{O{}O{x}}{\good[S][#1][#2]}
\NewDocumentCommand{\goodeinc}{O{x}}{\goode[\strinc][#1]}
\NewDocumentCommand{\goodsinc}{O{x}}{\goods[\strinc][#1]}
\NewDocumentCommand{\goodedec}{O{x}}{\goode[\strdec][#1]}
\NewDocumentCommand{\goodsdec}{O{x}}{\goods[\strdec][#1]}
\NewDocumentCommand{\goodezero}{O{x}}{\goode[\strzero][#1]}
\NewDocumentCommand{\goodszero}{O{x}}{\goods[\strzero][#1]}
\mathcommand{\ko}{\text{ko}}
\mathcommand{\ok}{\text{ok}}
\mathcommand{\noop}{\text{noop}}
\mathcommand{\koe}{\ko_E}
\mathcommand{\oke}{\ok_E}
\mathcommand{\kos}{\ko_S}
\mathcommand{\oks}{\ok_S}
\newlist{enuE}{enumerate}{1}
\setlist[enuE]{label=\textbf{[E\arabic*]}}
\newlist{enuS}{enumerate}{1}
\setlist[enuS]{label=\textbf{[S\arabic*]}}
\NewDocumentCommand{\ce}{O{i}}{\ensuremath{e_{#1}}\xspace}
\NewDocumentCommand{\cs}{O{i}}{\ensuremath{s_{#1}}\xspace}
\NewDocumentCommand{\cnt}{O{i}}{\ensuremath{c_{#1}}\xspace}
\mathcommand{\cntzero}{\cnt[0]}
\mathcommand{\cntone}{\cnt[1]}
\mathcommand{\eproc}{\bullet}
\NewDocumentCommand{\inc}{O{i}}{\ensuremath{\text{inc}_{#1}}\xspace}
\NewDocumentCommand{\dec}{O{i}}{\ensuremath{\text{dec}_{#1}}\xspace}
\newcommand{\sstyle}[1]{\textcolor{blue}{#1}}
\newcommand{\estyle}[1]{\textcolor{red}{#1}}
\mathcommand{\formkoe}{\Phi_{\koe}}
\mathcommand{\formkos}{\Phi_{\kos}}
\NewDocumentCommand{\formbadseq}{O{x}}{\ensuremath{\Psi_{\text{bad sequence}}(#1)}\xspace}
\NewDocumentCommand{\formbadtarget}{O{x}}{\ensuremath{\Psi_{\text{bad target}}(#1)}\xspace}
\NewDocumentCommand{\formbadsource}{O{x}}{\ensuremath{\Psi_{\text{bad source}}(#1)}\xspace}
\NewDocumentCommand{\formbadupkeep}{O{x}}{\ensuremath{\Psi_{\text{bad upkeep}}(#1)}\xspace}
\NewDocumentCommand{\formbadzerotest}{O{x}}{\ensuremath{\Psi_{\text{bad zero test}}(#1)}\xspace}
\NewDocumentCommand{\isstate}{O{x}}{\ensuremath{\States(#1)}\xspace}
\NewDocumentCommand{\istrans}{O{x}}{\ensuremath{\Trans(#1)}\xspace}
\NewDocumentCommand{\istransinc}{O{x}}{\ensuremath{\Transinc(#1)}\xspace}
\NewDocumentCommand{\istransdec}{O{x}}{\ensuremath{\Transdec(#1)}\xspace}
\NewDocumentCommand{\istranszero}{O{x}}{\ensuremath{\Transzero(#1)}\xspace}
\NewDocumentCommand{\isupkeep}{O{x}}{\ensuremath{\mathcal U(#1)}\xspace}
\NewDocumentCommand{\issys}{O{x}}{\ensuremath{\sigS(#1)}\xspace}
\NewDocumentCommand{\isenv}{O{x}}{\ensuremath{\sigE(#1)}\xspace}
\mathcommand{\formprefixE}{\Phi_{\text{bad prefix E}}}
\mathcommand{\formprefixS}{\Phi_{\text{bad prefix S}}}
\mathcommand{\formblockE}{\Phi_{\text{E blocks}}}
\mathcommand{\formblockS}{\Phi_{\text{S blocks}}}
\mathcommand{\formplayafterkoE}{\Phi_{\text{E plays after ko}}}
\mathcommand{\formplayafterkoS}{\Phi_{\text{S plays after ko}}}
\mathcommand{\first}{\texttt{first}}
\mathcommand{\second}{\texttt{second}}
\mathcommand{\last}{\texttt{last}}
\mathcommand{\N}{\mathbb{N}}
\mathcommand{\Zero}{\mathbb O}
\mathcommand{\im}{\operatorname{Im}}
\newcommand{\contrib}{\textbf}
\mathcommand{\logic}{\mathcal L}
\mathcommand{\logicbis}{\logic'}
\mathcommand{\thr}{t}
\mathcommand{\elem}{a}
\mathcommand{\elembis}{b}
\mathcommand{\elemter}{c}
\mathcommand{\var}{x}
\mathcommand{\varbis}{y}
\mathcommand{\classe}{\mathcal C}
\mathcommand{\prop}{\mathcal P}
\mathcommand{\structdom}{A}
\mathcommand{\struct}{\mathcal A}
\mathcommand{\structbisdom}{A'}
\mathcommand{\structbis}{\mathcal A'}
\mathcommand{\type}{\tau}
\mathcommand{\typebis}{\type'}
\mathcommand{\formule}{\varphi}
\mathcommand{\formulebis}{\psi}
\mathcommand{\formuleter}{\theta}
\mathcommand{\vocab}{\Sigma}
\mathcommand{\graphe}{\mathcal\graphedom}
\mathcommand{\graphebis}{\mathcal\graphebisdom}
\mathcommand{\graphedom}{G}
\mathcommand{\graphebisdom}{\graphedom'}
\mathcommand{\graphedombis}{H}
\mathcommand{\edgerel}{E}
\mathcommand{\motvide}{\varepsilon}
\NewDocumentCommand{\initsegment}{O{n}}{\ensuremath{[1,#1]}\xspace}
\mathcommand{\oldsysproc}{\initsegment[\nS+1]}
\mathcommand{\newsysproc}{\initsegment[\nS]}
\mathcommand{\envproc}{\initsegment[\nE]}
\mathcommand{\oldProcS}{\ProcS^{\nS+1}}
\mathcommand{\newProcS}{\ProcS^{\nS}}
\NewDocumentCommand{\ps}{O{}O{}}{\ensuremath{s_{#1}^{#2}}\xspace}
\NewDocumentCommand{\oldps}{O{}}{\ps[#1][\nS+1]}
\NewDocumentCommand{\newps}{O{}}{\ps[#1][\nS]}
\NewDocumentCommand{\pe}{O{}}{\ensuremath{e_{#1}}\xspace}
\NewDocumentCommand{\strat}{O{}}{\ensuremath{\mathcal S_{#1}}\xspace}
\mathcommand{\oldstrat}{\strat[\nS+1]}
\mathcommand{\newstrat}{\strat[\nS]}
\NewDocumentCommand{\map}{O{}}{\ensuremath{\sigma_{#1}}\xspace}
\mathcommand{\prevmap}{\map[r]}
\mathcommand{\nextmap}{\map[r+1]}
\NewDocumentCommand{\ghost}{O{}}{\ensuremath{\gamma_{#1}}\xspace}
\mathcommand{\prevghost}{\ghost[r]}
\mathcommand{\nextghost}{\ghost[r+1]}      
\NewDocumentCommand{\actarray}{O{}}{\ensuremath{\texttt{act}_{#1}}\xspace}
\mathcommand{\prevactarray}{\actarray[r]}
\mathcommand{\nextactarray}{\actarray[r+1]}
\NewDocumentCommand{\act}{O{}O{\ps}}{\ensuremath{\actarray[#1][#2]}\xspace}
\NewDocumentCommand{\prevact}{O{\ps}}{\ensuremath{\act[r][#1]}\xspace}
\NewDocumentCommand{\nextact}{O{\ps}}{\ensuremath{\act[r+1][#1]}\xspace}
\NewDocumentCommand{\invtp}{O{}O{}}{\ensuremath{(\texttt{TP}^\texttt{#2}_{#1})}\xspace}
\NewDocumentCommand{\invstp}{O{}}{\invtp[#1][S]}
\mathcommand{\previnvstp}{\invstp[r]}
\mathcommand{\nextinvstp}{\invstp[r+1]}
\NewDocumentCommand{\invetp}{O{}}{\invtp[#1][E]}
\mathcommand{\previnvetp}{\invetp[r]}
\mathcommand{\nextinvetp}{\invetp[r+1]}
\NewDocumentCommand{\invact}{O{}}{\ensuremath{(\texttt{Act}_{#1})}\xspace}
\mathcommand{\previnvact}{\invact[r]}
\mathcommand{\nextinvact}{\invact[r+1]}
\NewDocumentCommand{\invghosttp}{O{}}{\ensuremath{(\ghost\texttt{TP}_{#1})}\xspace}
\mathcommand{\previnvghosttp}{\invghosttp[r]}
\mathcommand{\nextinvghosttp}{\invghosttp[r+1]}
\NewDocumentCommand{\invghostcc}{O{}}{\ensuremath{(\ghost\texttt{CC}_{#1})}\xspace}
\mathcommand{\previnvghostcc}{\invghostcc[r]}
\mathcommand{\nextinvghostcc}{\invghostcc[r+1]}
\NewDocumentCommand{\invmax}{O{}}{\ensuremath{(\texttt{Max}_{#1})}\xspace}
\mathcommand{\previnvmax}{\invmax[r]}
\mathcommand{\nextinvmax}{\invmax[r+1]}
\NewDocumentCommand{\dw}{O{}O{}}{\ensuremath{w_{#1}^{#2}}\xspace}
\mathcommand{\prevolddw}{\dw[r][\nS+1]}
\mathcommand{\nextolddw}{\dw[r+1][\nS+1]}
\mathcommand{\prevnewdw}{\dw[r][\nS]}
\mathcommand{\nextnewdw}{\dw[r+1][\nS]}
\NewDocumentCommand{\msotp}{O{p}O{\dw}O{k}}{\ensuremath{\text{mso-tp}^{#3}(#1\,|\,#2)}\xspace}
\NewDocumentCommand{\prevoldmsotp}{O{\prevmap(s)}O{k}}{\msotp[#1][\prevolddw][#2]}
\NewDocumentCommand{\nextoldmsotp}{O{\nextmap(s)}O{k}}{\msotp[#1][\nextolddw][#2]}
\NewDocumentCommand{\prevnewmsotp}{O{s}O{k}}{\msotp[#1][\prevnewdw][#2]}
\NewDocumentCommand{\nextnewmsotp}{O{s}O{k}}{\msotp[#1][\prevnewdw][#2]}
\NewDocumentCommand{\height}{O{\type}}{\ensuremath{h(#1)}\xspace}
\NewDocumentCommand{\boundcc}{O{\height}}{\ensuremath{F_\text{CC}(#1)}\xspace}
\NewDocumentCommand{\boundtp}{O{\height}}{\ensuremath{F_\text{TP}(#1)}\xspace}  
\NewDocumentCommand{\CC}{O{\type}}{\ensuremath{\text{CC}(#1)}\xspace}
\mathcommand{\prevghostcc}{\CC[\oldmsotp[\prevghost]]}
\mathcommand{\nextghostcc}{\CC[\oldmsotp[\nextghost]]}
\NewDocumentCommand{\tpord}{O{k}}{\ensuremath{\prec_{#1}}\xspace}
\NewDocumentCommand{\auto}{O{k}}{\ensuremath{\mathfrak A_{#1}}\xspace}
\NewDocumentCommand{\preordtp}{O{k}}{\ensuremath{\rightharpoonup_{#1}}\xspace}
\NewDocumentCommand{\equivtp}{O{k}}{\ensuremath{\rightleftharpoons_{#1}}\xspace}
\mathcommand{\false}{\texttt{false}}
\mathcommand{\true}{\texttt{true}}
\mathcommand{\seqE}{\dw[E]}
\mathcommand{\seqS}{\dw[S]}
\mathcommand{\longseq}{\widehat{\dw}}
\mathcommand{\resetact}{\texttt{reset\_act()}}
\mathcommand{\restr}{\upharpoonright}
\newtheorem{theorem}{Theorem}
\newtheorem{lemma}[theorem]{Lemma}
\begin{document}
\title{First order synthesis for data words revisited}
%
%
\author{Julien Grange\footnote{Univ Paris Est Creteil, LACL, F-94010 Creteil, France}\and
Mathieu Lehaut\footnote{University of Gothenburg, Sweden}}
\date{}

%
%
%
\maketitle              

\begin{abstract}
  We carry on the study of the synthesis problem on data words for fragments of first order logic, and delineate precisely the border between decidability and undecidability.  
\end{abstract}

\section{Introduction}
The reactive synthesis problem, which dates back to Church~\cite{sisl1957-Chu}, is about generating a \emph{correct-by-construction} program with respect to a given specification.
It is often formulated as a two-player game between an uncontrollable Environment and the System, who alternate picking an input and an output letter, respectively.
This creates an infinite execution, and the goal of the System is to make every execution satisfy the specification, whatever Environment does.
If the System has a strategy to ensure this result, it then corresponds to a program that is sure to respect the specification.
The original problem is decidable and was solved by B\"{u}chi and Landweber \cite{TAMS138-BL}, and several improvements and extensions have since been studied. However, it only encompasses finite alphabets, which is inadequate for representing executions of distributed systems involving a number of processes which is not fixed; this occurs in communication protocols, distributed algorithms, multi-agent systems, swarm robotics, or with ad-hoc networks.

We thus consider an extension of this problem which deals with alphabets whose size is not fixed. 
In those cases it is more adequate to use data words to represent executions, where a data word consists of a sequence of pairs (action, data), the action coming from a finite alphabet while the data, which comes from an infinite alphabet (or at least, an alphabet whose size is not fixed), specifies who performed said action.

We will assume that the sets of actions of both players are disjoint, but this does not necessarily need to be the case for the processes.
We will consider two distinct cases, depending on process ownership.
In the first case, all processes are shared, meaning that both players can perform their respective actions on any process.
Consider for instance the modelling of a drone fleet, where a process corresponds to a single drone, inputs correspond to atmospheric conditions and outputs to possible movements of the drone. In that case, it makes sense to have both \Sys and \Envi play their actions on shared processes.
In the second case, each player has their own processes, on which only they can play an action.
This is for instance useful for modelling a single machine with several components, where each component is considered to be its own process, some being sensors/receptors (belonging to \Envi) while other parts (\Sys's processes) perform the machine's output.

Finally, it remains to chose a formalism to express the specification which should be satisified by System.
As always, there is a trade off between expressiveness of the formalism and tractability of its synthesis problem.
Many specification languages for data words have been studied, and the synthesis problem has been investigated for some of them, such as register automata~\cite{KhalimovMB18} and the Logic of Repeating Values~\cite{FigueiraP18}.
Here we follow the steps of~\cite{DBLP:conf/fossacs/BerardBLS20} and consider first order logic \FO and its fragment \FOtwo where only two reusable variables are allowed.
First order logic for words is well understood, and its extension to data words is easy to define, which makes it a good candidate for our purpose.
On data words, \FO has acess to a binary predicate $\sim$ such that $x \sim y$ if positions $x$ and $y$ belong to the same process. On top of that, a unary predicate for each action marks the positions where this action has been played (as is usually the case for words), and we will study variations where relative positions will either be accessible via a binary predicate $<$, its successor relation \plusone, both or none of them.

The satisfiability problem, which can be seen as a very restricted synthesis problem in the case where Environment never acts, has been showed to be decidable in the two-variable fragment with $<$ and \plusone (\FOtwoall)~\cite{DBLP:conf/lics/BojanczykMSSD06}.
The synthesis problem itself has previously been studied in~\cite{DBLP:conf/fossacs/BerardBLS20} for various of fragments of \FO, with both decidable and undecidable results, depending on the variety of the considered fragment.
The aim of this paper is to extend those results and get a better understanding on where the border between decidability and undecidability lies exactly.


\section{Preliminaries}
\paragraph*{Data words.}
Let \Alp be a finite alphabet of \emph{actions}, partitioned into \Sys actions \AlpS and \Envi actions \AlpE. Let \ProcS (resp. \ProcE) be a finite set of \emph{processes} belonging to \Sys (resp. \Envi), and let \ProcM be a finite set of processes which can be activated by both players. These sets are assumed to be pairwise disjoint, and we define $\Proc:=\ProcS\cup\ProcE\cup\ProcM$. Let us refer to $\AlpS \times (\ProcS\cup\ProcM)$ as \vocabS and to $\AlpE\times(\ProcE\cup\ProcM)$ as \vocabE. A \emph{data word} is a finite or infinite sequence $w = (a_0, p_0) (a_1, p_1) \dots$ over $\vocab:=\vocabS\cup\vocabE$.

Intuitively, a data word represents the trace of an execution of a distributed system.
Each pair $(a,p)$ (called a \emph{position}) indicates that action $a$ has been performed by process $p$. Note that the above definition prevents a player from playing on processes owned by their opponent.

\paragraph*{First order logic.}
Let \Alp be a finite alphabet of actions and \Var be a set of variable names.
We define formulas of first order logic with $<$ and \plusone on data words (\FOall) as follows:
\begin{align*}
t :=~ &a(x) \mid x=y \mid x < y \mid x = y \plusone \mid x\sim y \mid \ProcS(x)\mid \ProcE(x) \mid \ProcM(x)\\
\formule :=~ &t \mid \neg \formule \mid \formule \land \formule \mid \formule \lor \formule \mid \exists x. \formule \mid \forall x. \formule 
\end{align*}
where $a \in \Alp$ and $x,y \in \Var$.

A data word $w = (a_0, p_0) (a_1, p_1) \dots$ is represented as a structure whose domain has an element for each position $(a_i,p_i)$, and an element for each process in \ProcS (resp. \ProcE, resp. \ProcM), which is in the interpretation of the unary predicate \ProcS (resp. \ProcE, resp. \ProcM). Note that this allows to quantify on processes on which no action has been played.
For every action $a\in\Alp$, the unary predicate $a$ is interpreted as the set of positions with action $a$. The binary predicate $<$ is interpreted as the linear order on positions and \plusone as its successor relation. Finally, $(a_i,p_i)\sim(a_j,p_j)$ holds if and only if $p_i=p_j$; on top of that, $\sim$ also relates each process $p$ to all positions $(a,p)$.
Beyond that, we consider the usual semantics for first order logic, e.g. see~\cite{DBLP:books/sp/Libkin04}.


We then define various fragments of \FOall depending on which binary predicates are allowed, and how many variables are available.
\FOall is the whole fragment with $\sim$, $<$, and $\plusone$, while \FOeq only allows $\sim$ out of the three predicates.
Furthermore, we also study the two-variable restriction of \FO, in which one can only use two (reusable) variables in the whole formula, i.e. $\Var = \{x,y\}$.
We denote this variation by \FOtwo, and combine it with the previous notations; for instance \FOtwoeqord is the two-variable fragment with predicates $\sim$ and $<$, but without $\plusone$. 
Note that unless restricted to two variables, first order logic can define $\plusone$ using $<$, but not the other way around.


\paragraph*{Synthesis.}
Let us now define the synthesis problem. A data word is seen as an execution, or history, in a game opposing System and Environment.
A \emph{strategy} \strat for System is a function $\vocab^\ast \to \vocabS \cup \{\varepsilon\}$ which, given some finite data word representing the history of the game, either outputs some move in \vocabS (i.e. plays an action on some process which can be activated by \Sys) or passes its turn.
An execution $w = \sigma_0 \sigma_1 \dots$ is said to be $\strat$-compatible if
\begin{itemize}
\item for all $0\leq i < |w|$, $\sigma_i \in \vocabS$ implies $\sigma_i = \strat(\sigma_0 \dots \sigma_{i-1})$, and
\item if $w$ is finite, then $\strat(w) = \varepsilon$.
\end{itemize}
With this definition we intuitively allow Environment to block System from playing anytime Environments wants to play, which could potentially be forever.
To prevent pathological cases, we will consider only \emph{fair} executions for \strat, i.e. executions $w = \sigma_0 \sigma_1 \dots$ such that if there are infinitely many $i$ such that $\strat(\sigma_0 \dots \sigma_{i-1}) \neq \varepsilon$, then there are infinitely many positions in \vocabS. In other words, \Envi can delay actions from \Sys, but cannot silence them forever.
Finally, for a given formula $\formule$, we say that $\strat$ is \emph{winning} for \formule if all executions which are $\strat$-compatible and fair for \strat satisfy $\formule$. 

In this paper we focus on two specific configurations for processes: when they are shared, and when they are partitioned between players.
\paragraph*{Shared.} We say that the processes are \emph{shared} when $\ProcS = \ProcE = \emptyset$, i.e. when all processes can be affected both by System and Environment. The synthesis problem for logic \logic with shared processes is denoted \SHAREDSYNTH[\logic]: it amounts, given an alphabet \Alp of actions and a formula $\formule\in\logic$, to decide whether there exists a winning strategy \strat for \formule with $\ProcS = \ProcE = \emptyset$ and some (finite) $\ProcM$. Since the identity of the shared processes does not matter and only the cardinality of \ProcM is relevant, we say in that case that \strat is a $|\ProcM|$-winning strategy for \formule.
\paragraph*{Partitioned.} We say that the processes are \emph{partitioned} if $\ProcM = \emptyset$. In that case, each player has their own pool of processes on which they can play, but that their opponent cannot use. The synthesis problem for logic \logic with partitioned processes is denoted \PARTSYNTH[\logic]. As above, it is the problem of deciding, given \Alp and $\formule\in\logic$, whether \Sys has a winning strategy for \formule with $\ProcM=\emptyset$ and some arbitrary finite sets $\ProcS$ and $\ProcE$. Similarly, we say in that case that \strat is a $(|\ProcS|,|\ProcE|)$-winning strategy for \formule.

\paragraph*{Known results.}

Data words were introduced in~\cite{DBLP:conf/lics/BojanczykMSSD06}. Boja\'nczyk et al. proved that the satisfiability problem for \FOtwoall on data words is decidable. Note that this corresponds to the synthesis problem for \FOtwoall when both \ProcE and \ProcM are empty. They also showed that as soon as a third variable is available, this problem becomes undecidable, even without the order (i.e. for \FOthreeeqplusone).

Decidability of the satisfiability problem for the two-variable logic in this setting is what prompted Bérard et al. to consider the synthesis problem on data words, for serveral fragments of first order logic~\cite{DBLP:conf/fossacs/BerardBLS20}. They proved that the synthesis problem for \FOeq is decidable when processes are partitioned and when the number of \Envi processes is fixed. In constrast, they established undecidability results for \FOeq and \FOtwoall when processes are shared.

\paragraph{Contributions.}

We summarize the contributions of this paper in Table~\ref{tab:results}, in bold font. Results from~\cite{DBLP:conf/fossacs/BerardBLS20} are also mentioned. As can be seen, we bridge all the gaps left open by~\cite{DBLP:conf/fossacs/BerardBLS20}.

\begin{table}
  \def\arraystretch{1.2}
  \setlength\tabcolsep{1em}
  \begin{center}
    \begin{tabular}{| c | c | c |}
      \hline
      Logic $\backslash$ Processes & Partitioned & Shared\\
      \hline
      \FOtwoeq & \contrib{decidable} (Th~\ref{th:eq}) & \contrib{undecidable} (Th~\ref{th:twomixed})\\
      \FOeq & \contrib{decidable} (Th~\ref{th:eq}) & undecidable~\cite{DBLP:conf/fossacs/BerardBLS20}\\
      \FOtwoeqord & \contrib{undecidable} (Th~\ref{th:twoeqord}) & \contrib{undecidable} (Th~\ref{th:twomixed})\\
      \FOtwoeqplusone & \contrib{undecidable} (Th~\ref{th:twoeqplusone}) & \contrib{undecidable} (Th~\ref{th:twomixed})\\
      \FOtwoall & \contrib{undecidable} (Th~\ref{th:twoeqord} or \ref{th:twoeqplusone}) & undecidable~\cite{DBLP:conf/fossacs/BerardBLS20}\\
      \hline
    \end{tabular}
  \end{center}
  \caption{Decidability and undecidability of the synthesis problem for fragments of first order logic, in the pure and mixed cases. New results from this paper appear in bold.}
  \label{tab:results}
\end{table}

\section{\FOeq with processes partitioned between players}
First, we turn to the case where processes are partitioned between \Sys and \Envi. It has been shown in~\cite{DBLP:conf/fossacs/BerardBLS20} that in that case, the synthesis problem for \FOeq is decidable when \Sys has an arbitrary number of processes, but \Envi only has access to a fixed number of processes. We extend this result by lifting this restriction: we show in Theorem~\ref{th:eq} that \PARTSYNTH[\FOeq] is decidable.

As a first step towards proving this result, let us show that beyond a certain threshold (which depends only on the formula), having access to more processes in $\ProcE$ is always a boon for the \Envi. Note that this is note true for small cardinalities of \ProcE: it is not hard to design a game where \Envi wins if $\ProcE=\emptyset$ but loses as soon as $|\ProcE|\geq 1$.

\begin{lemma}
  \label{lem:more_is_better}
  For every alphabet $A$ and every $\formule\in\FOeq$, there exists $\minind\in\N$ such that for every $\nS\in\N$ and every $\nE\geq\minind$, if \Sys has a $(\nS,\nE+1)$-winning strategy for $\formule$, then \Sys has an $(\nS,\nE)$-winning strategy for $\formule$.
\end{lemma}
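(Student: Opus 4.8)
The plan is to exploit the fact that \FOeq has access neither to the order nor to the successor, so that a data word is determined, up to elementary equivalence, by very coarse combinatorial data. Let $k$ be the quantifier rank of \formule. Since the signature consists only of the unary action and ownership predicates together with the equivalence $\sim$, a data word is, as a structure, nothing but a disjoint union of processes, each process carrying the multiset of actions played on it (and its owner). First I would run a standard \EF analysis to show that the $\foeq$-type of such a structure is entirely determined by the function counting, for each \emph{process-type} (an owner together with a multiset of actions, each action counted up to a threshold), how many processes realise it, these counts being themselves taken up to a threshold $\theta=\theta(k)$. In particular, two words whose thresholded process-type censuses coincide satisfy the same \formule. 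I would then set \minind large enough (say $\theta$ times the number of process-types) that, by pigeonhole, in any word with at least \minind environment processes some environment-type is realised at least $\theta$ times; call such a type \emph{abundant}.

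Next I would transfer the winning strategy by simulation. Given an $(\nS,\nE+1)$-winning strategy \strat with $\nE\geq\minind$, the System plays the $(\nS,\nE)$ game while maintaining in its head a virtual $(\nS,\nE+1)$ play: it identifies the real environment processes $1,\dots,\nE$ with the first $\nE$ virtual ones, reacts to every environment move by the identical move in the virtual play, and answers with whatever \strat prescribes on the virtual history (the System's own processes being common to both games). Every real play $w$ then lifts to a \strat-compatible virtual play $w^+$, and fairness is preserved since the System's moves are literally those of \strat; as \strat is winning, $w^+\models\formule$. Here $w^+$ is obtained from $w$ by adjoining exactly one extra environment process, the $(\nE+1)$-st virtual one.

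It remains to deduce $w\models\formule$ from $w^+\models\formule$, and this is where the threshold enters and where the main difficulty lies. If the System simply left the extra virtual process untouched, $w^+$ would differ from $w$ by one \emph{isolated} environment process; this is harmless when $w$ already contains at least $\theta$ empty environment processes, but it is detectable by \formule precisely when the Environment is greedy and activates all of its processes, leaving none empty. The fix is to make the extra process a \emph{duplicate} of an abundant environment-type rather than an isolated one, so that $w^+$ and $w$ share the same thresholded census, whence $w^+\foeq w$ and thus $w\models\formule$. The crux -- and the step I expect to be the genuine obstacle -- is that abundance is a property of the \emph{completed} play, whereas the simulation must commit \emph{online} to the behaviour of the extra process. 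To reconcile these I would lean precisely on the order-invariance of \FOeq established in the first step, which lets me relocate the duplicated actions freely and match the phantom process to a type the pigeonhole principle guarantees abundant as soon as $\nE\geq\minind$; carrying this out cleanly -- whether by having the phantom shadow a carefully chosen environment process, or by arguing directly on the finite thresholded-census game rather than on individual plays -- is the technical heart of the proof, after which the lemma follows.
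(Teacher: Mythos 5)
Your first step is sound: the $\foeq$-type of a data word is indeed determined by the thresholded census of process-types, and this abstraction is, in substance, the parametrised-vector-game characterisation that the paper imports from~\cite{DBLP:conf/fossacs/BerardBLS20}. But your proof has a genuine gap, and it sits exactly where you place it: the online choice of the phantom process's behaviour is never resolved, and neither of the two fixes you gesture at closes it as stated. Having the phantom shadow a ``carefully chosen'' real \Envi process fails because \Envi plays adversarially: at no finite time can the \Sys designate a real process whose \emph{final} type is guaranteed to be abundant, and the phantom's multiset of actions can only grow, so a wrong early commitment cannot be undone. Appealing to the order-invariance of \FOeq does not help either: order-invariance concerns the evaluation of \formule on the completed word, whereas the obstruction is that \strat is an arbitrary, order-\emph{sensitive} function of histories, so every phantom move inserted into the virtual history irrevocably changes all subsequent answers of \strat. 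What is missing is a monotone invariant controlling where the phantom can still be steered, for instance: keep the phantom at a capped multiset $\mu$ such that at least $K\cdot(d+1)^{\pot[\mu]}$ real \Envi processes currently dominate $\mu$ (here $d=|\AlpE|$, $B$ is the letter-counting threshold, $K$ the process-counting threshold, and $\pot[\mu]$ the number of increments still available from $\mu$); re-target the phantom greedily, to a dominating location currently holding at least $K$ real processes, whenever fewer than $K$ real processes sit exactly at $\mu$; and observe that since capped censuses evolve monotonically they stabilise after finitely many census-changing moves, so only finitely many re-targetings occur and the phantom's final type is abundant. This potential-function argument is the actual mathematical content of the lemma, and it is precisely what your proposal defers.

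It is worth contrasting this with how the paper avoids the online difficulty altogether. The paper passes to the parametrised vector game of~\cite{DBLP:conf/fossacs/BerardBLS20} (a determined, effectively finite game equivalent to your thresholded-census game) and proves the \emph{contrapositive}: if \Envi wins with $(\nS,\nE)$ pebbles then \Envi wins with $(\nS,\nE+1)$ pebbles for $\nE\geq\minind$, where $\minind=K\cdot(d+1)^{d\cdot B}$. There the extra pebble belongs to the player who decides where it goes, so no adversarial commitment problem arises: \Envi plays its $\nE$-pebble winning strategy and hides the extra pebble on a marked location $\spel$ maintained so that $\numafterE[\spel]\geq K\cdot(d+1)^{\pot[\spel]}$, which guarantees it can always be parked, by the end of the (stabilising) game, on a location already holding at least $K$ real pebbles, rendering it invisible to every acceptance condition. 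Your \Sys-side simulation is a genuinely different route, and has the merit of not invoking determinacy; but to make it a proof you must supply the re-targeting invariant above, which is the step you explicitly leave open.
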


\begin{proof}

  For this proof, we rely on a characterisation of the synthesis problem for \FOeq via parametrised vector games, defined in~\cite{DBLP:conf/fossacs/BerardBLS20}. The \emph{parametrised vector game} on alphabet $\Alp=\AlpS\uplus\AlpE$ with bound $B\in\N$, victory condition \victory and (\nS,\nE) pebbles is a game between two players, \Sys and \Envi, defined as follows. An \emph{S-location} (resp. $\emph{E-location}$) is a vector $\ll\! a_1^{\nu_1}\cdots a_d^{\nu_d}\!\gg$, where $a_1,\cdots,a_d$ are the letters from $\AlpS$ (resp. $\AlpE$), and $0\leq\nu_i\leq B$. An S-location (resp. E-location) $\ll\! a_1^{\mu_1}\cdots a_d^{\mu_d}\!\gg$ is reachable from the S-location (resp. E-location) $\ll\! a_1^{\nu_1}\cdots a_d^{\nu_d}\!\gg$ if $\mu_i\geq\nu_i$ for every $i$. The initial S-location (resp. E-location) is the one for which all $\nu_i$ are zero. \Sys (resp. \Envi) has a number \nS (resp. \nE) of pebbles, which are private to each player. The identity of the pebbles is irrelevant; only their number in each location matters. An S-configuration \confS (resp. an E-configuration \confE) is just a way for \Sys (resp. \Envi) to distribute their pebbles on their locations; namely, it is a function which maps every S-location (resp. E-location) \pos to $\numS\in\{0,\cdots,\nS\}$ (resp. ($\numE\in\{0,\cdots,\nE\}$) whose sum over the locations is \nS (resp. \nE). An S-configuration (and similarly for E-configurations) \confbisS is reachable from \confS if it is possible to obtain \confbisS from \confS by moving a number of pebbles from locations \pos to locations that are reachable from \pos.
  
  A \emph{location} is a couple composed of an S- and an E-locations, and the initial location \posinit is the couple of initial S- and E-locations. A \emph{configuration} $\conf=(\confS,\confE)$ is a couple composed of an S- and an E-configuration. In the initial configuration, the pebbles of both players are placed on their respective initial locations. An \emph{acceptance condition} $c$ is a mapping from the set of S- and E-locations to the set of conditions $\{``=n", ``\geq n" : n\in\N\}$, and a configuration $(\confS,\confE)$ satisfies this acceptance condition if for every S- or E-location \pos, \numS or \numE satisfies $c(\pos)$. The winning condition \victory is a finite set of acceptance conditions, and a configuration is said to be \emph{winning for \Sys} if it satisfies at least one acceptance condition in \victory; otherwise, it \emph{winning for \Envi}.

  A strategy for \Sys (resp. \Envi) is a function mapping a configuration $(\confS,\confE)$ to a configuration $(\confbisS,\confE)$ where \confbisS is reachable from \confS (resp. to a configuration $(\confS,\confbisE)$ where \confbisE is reachable from \confE). A strategy for \Sys (resp. \Envi) is said to be \emph{winning} if for every strategy for \Envi (resp. \Sys), the result of letting both strategies alternate moves against each other, starting with \Envi (which can easily be seen to lead to a well-defined final configuration) is winning for \Sys (resp. for \Envi). Parametrised vector games are determined.
  
  Parametrised vector games are shown in~\cite{DBLP:conf/fossacs/BerardBLS20} (Lemma 11) to be equivalent to the synthesis problem for \FOeq: for every formula $\formule\in\FOeq$, there exists a parametrised vector game \game such that for every $\nS,\nE\in\N$, \Sys has an $(\nS,\nE)$-winning strategy for \formule if and only if \Sys wins \game with $(\nS,\nE)$ pebbles.

  Thus, to prove the lemma, we show that if \Envi has a winning strategy in the parametrised vector game \game with $(\nS,\nE)$ pebbles, \Envi has a winning strategy in \game with $(\nS,\nE+1)$ pebbles, for every $\nE\geq\minind$.
  Let $\AlpE=\{a_1,\cdots,a_d\}$, $B$ be the bound of \game, and let $K$ be the largest constant that occurs in \victory.
  We claim that $\minind:=K\cdot(d+1)^{d\cdot B}$ fits the bill.

  Given an E-location $\pos:=\ll\! a_1^{\nu_1}\cdots a_d^{\nu_d}\!\gg$, let us denote by $\afterE$ the set of E-locations that can be reached from \pos, i.e. locations $\ll\! a_1^{\mu_1}\cdots a_d^{\mu_d}\!\gg$ for $\mu_i\geq\nu_i$, and by $\numafterE$ the number of \Envi pebbles in locations of $\afterE$, i.e. \[\numafterE:=\sum_{\posbis\in\afterE}\numE[\posbis]\,.\]
  Let $\succE$ be the subset of successors of $\pos$ in $\afterE$, i.e. locations $\ll\! a_1^{\mu_1}\cdots a_d^{\mu_d}\!\gg$ where $\mu_i=\nu_i$ except for one $i$, for which $\mu_i=\nu_i+1$.

  We define the potential $\pot$ of some E-location $\pos=\ll a_1^{\nu_1}\cdots a_d^{\nu_d}\gg$ as follows: $\pot:=\sum_{i=1}^d (B-\nu_i)$. In other words, $\pot$ is the number of times an \Envi pebble in $\pos$ can be moved. Note that the potention of the initial E-location is $d\cdot B$.

  Consider some E-configuration $\confE$, and let $P(\pos)$ denote the property
  \begin{equation}
    \numafterE\geq K\cdot (d+1)^{\pot}\,.\tag{$P(\pos)$}
  \end{equation}
  If $P(\pos)$, then either $\numE\geq K$ or there exists  $\posbis\in \succE$ such that $P(\posbis)$. 
  Indeed, note that \[\afterE=\{\pos\}\cup\bigcup_{\posbis\in \succE}\afterE[\posbis]\,.\]
  If $\pot=0$, there is nothing to prove. Otherwise, suppose that $\numE<K$ and $P(\posbis)$ fails for every $\posbis\in\succE$. Then we would have \[\numafterE\leq \numE + \sum_{\posbis\in\succE}\numafterE[\posbis]<K+d\cdot K(d+1)^{\pot-1}\leq K(d+1)^{\pot}\,,\] contradicting $P(\pos)$.

  With this in mind, a simple induction on $\pot$ establishes the following: if $P(\pos)$ holds for some $\pos$, there there exists some $\posbis\in\afterE$ such that $P(\posbis)$ and $\numE\geq K$.

  We are now ready to describe the winning strategy for \Envi in \game with $(\nS,\nE+1)$ pebbles, based on their winning strategy with $(\nS,\nE)$ pebbles: at all time, \Envi keeps a marked E-location $\spel$ where they place their additional pebble.

  We argue that it is at all time possible to choose an $\spel$ which is such that $P(\spel)$ and $\numE[\spel]\geq K$ in the configuration of the game with $(\nS,\nE)$ pebbles, such that \Envi can move the additional pebble from one $\spel$ to the next. Indeed, this holds in the initial configuration by choice of $\minind$ (in that case, $\spel$ is the initial E-location), and the previous observation ensures that it is always possible to find, after each move of \Envi in the original game, a new position $\pos$ of $\afterE[\spel]$, which satisfies $P(\pos)$ and $\numE\geq K$. Then we take such an $l$ as our new $\spel$.

  Given that $\numE[\spel]\geq K$ and that, by definition, $K$ is the largest bound appearing in the victory condition \victory, the additional pebble lying in $\spel$ is irrelevant to whether a congiguration is winning for \Envi or not.
\end{proof}

We now have the tools to prove our main decidability result:

\begin{theorem}
  \label{th:eq}
  \PARTSYNTH is decidable.
\end{theorem}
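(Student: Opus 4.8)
The plan is to reduce the decision problem \PARTSYNTH to finitely many instances of the fixed-\nE synthesis problem, each of which is decidable by Bérard et al.~\cite{DBLP:conf/fossacs/BerardBLS20}. Recall that solving \PARTSYNTH amounts to deciding, for a given alphabet \Alp and formula $\formule\in\FOeq$, whether there exist $\nS,\nE\in\N$ such that \Sys has an $(\nS,\nE)$-winning strategy for $\formule$. The result of~\cite{DBLP:conf/fossacs/BerardBLS20} provides, for each \emph{fixed} value of $\nE$, an algorithm deciding whether there is some $\nS$ for which \Sys has an $(\nS,\nE)$-winning strategy. So all that remains is to bound the number of values of $\nE$ that need to be tested.

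This is exactly where Lemma~\ref{lem:more_is_better} comes in. First I would observe that the threshold $\minind$ it provides is not merely existential but computable: its proof yields the explicit value $\minind = K\cdot(d+1)^{d\cdot B}$, where $d=|\AlpE|$, and $B$ and $K$ are respectively the bound and the largest victory constant of the parametrised vector game $\game$ attached to $\formule$, all of which are effectively obtained from $\formule$. I would then establish the following claim: \Sys has an $(\nS,\nE)$-winning strategy for some $\nS,\nE\in\N$ if and only if it has an $(\nS,\nE)$-winning strategy for some $\nS\in\N$ and some $\nE\in\{0,1,\dots,\minind\}$.

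The right-to-left direction is immediate. For the converse, suppose \Sys has an $(\nS,\nE)$-winning strategy with $\nE>\minind$. Applying Lemma~\ref{lem:more_is_better} with parameter $\nE-1\geq\minind$ yields an $(\nS,\nE-1)$-winning strategy; iterating this descent---each intermediate value of the second parameter staying $\geq\minind$---produces an $(\nS,\minind)$-winning strategy, which falls in the desired range. Since for each of the finitely many values $\nE\in\{0,\dots,\minind\}$ the existence of a suitable $\nS$ is decidable by~\cite{DBLP:conf/fossacs/BerardBLS20}, the whole problem is decidable: we compute $\minind$, run the fixed-\nE procedure for each $\nE\leq\minind$, and answer positively if and only if at least one of these runs succeeds.

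The genuine difficulty of the argument is concentrated in Lemma~\ref{lem:more_is_better} (the monotonicity of \Envi's advantage above the threshold), which is already proven; the step above is a short combinatorial wrap-up. The only points requiring care are getting the direction of the monotonicity right---it is winning against $\nE+1$ processes that entails winning against $\nE$, and not the reverse---and checking that $\minind$ is computable rather than merely finite, since otherwise the reduction would fail to be effective.
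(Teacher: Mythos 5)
Your proof is correct, and it rests on the same key ingredient as the paper's: Lemma~\ref{lem:more_is_better}, iterated exactly as you describe (descending from \nE to $\nE-1$ while the parameter stays above \minind) to cap the number of \Envi processes at the computable threshold \minind. Where you diverge is in the wrap-up. You leave \nS unbounded and invoke, for each fixed $\nE\leq\minind$, the decidability result of~\cite{DBLP:conf/fossacs/BerardBLS20} (synthesis with arbitrarily many \Sys processes but a fixed number of \Envi processes) as a black box, so the whole problem reduces to $\minind+1$ calls to that procedure. The paper instead bounds both parameters: it uses Theorem~15 of~\cite{DBLP:conf/fossacs/BerardBLS20} to obtain a cutoff \cut on \nS, restricts the search to the finite box $[0,\cut]\times[0,\minind]$, and decides each fixed configuration $(\nS,\nE)$ by duplicating letters of the alphabet and appealing to the classical B\"uchi--Landweber synthesis on plain words. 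That route forces a four-case analysis whose most delicate case ($\nS>\cut$ and $\nE\leq\minind$) needs a monotonicity property of the cutoff ($\hat N$ growing with the number of \Envi processes) that is visible only in the \emph{proof}, not the statement, of Theorem~15. Your black-box decomposition sidesteps this subtlety entirely, at the price of being less explicit about the finite search space; both arguments are sound, and your insistence that \minind be computable rather than merely finite is precisely the point the paper also makes.
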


\begin{proof}
  Let $(A,\formule)$ be an input for \PARTSYNTH, and let \minind be the corresponding integer whose existence is guaranteed by Lemma~\ref{lem:more_is_better}.

  Theorem 15 from~\cite{DBLP:conf/fossacs/BerardBLS20} states that there exists some integer \cut such that \Sys has a $(\cut,\minind)$-winning strategy for \formule iff \Sys has a $(\nS,\minind)$-winning strategy for $\formule$ for any (every) $\nS\geq\cut$.

  We argue in the following that \Sys has an $(\nS,\nE)$-winning strategy for \formule for some $\nS,\nE\in\N$ if and only if \Sys has an $(\nS,\nE)$-winning strategy for $\formule$ for some $\nS\leq\cut$ and $\nE\leq\minind$. Thus there are only a finite number of process configurations to explore in order to decide whether \Sys has a $(\nS,\nE)$-winning strategy for $\formule$ for some $\nS,\nE\in\N$. When the numbers \nS and \nE of processes are fixed, one can duplicate the letter from the alphabet (for instance, multiplying \AlpS by the number of \Sys processes), and reduce the synthesis problem on data words to the decidable synthesis problem on plain words~\cite{TAMS138-BL}, and thus decide the problem in each of this finitely many fixed configurations. It is important to note that the bounds $\cut$ and $\minind$ are computable. 
  
  Let us now show that one need not look further than $(\cut,\minind)$: suppose that \Sys has an $(\nS,\nE)$-winning strategy for $\formule$. Let us consider four cases, which cover all the possible values of \nS and \nE.
  \begin{itemize}
  \item If $\nS\leq\cut$ and $\nE\leq\minind$, there is nothing to prove.
  \item If $\nS>\cut$ and $\nE>\minind$, then in virtue of Lemma~\ref{lem:more_is_better}, \Sys has an $(\nS,\minind)$-winning strategy for $\formule$. In turn, Theorem~15 from~\cite{DBLP:conf/fossacs/BerardBLS20} ensures \Sys has an $(\cut,\minind)$-winning strategy for $\formule$.
  \item If $\nS\leq\cut$ and $\nE>\minind$, then Lemma~\ref{lem:more_is_better} ensures \Sys has an $(\nS,\minind)$-winning strategy for $\formule$.
  \item Finally, suppose $\nS>\cut$ and $\nE\leq\minind$. Remark that in the proof of Theorem~15 from~\cite{DBLP:conf/fossacs/BerardBLS20}, $\hat N$ increases as $k_{e}$ increases. This means that, since \Sys has a winning strategy with more than \cut processes for \formule when \Envi has \minind processes if and if \Sys has an $(\cut,\minind)$-winning strategy for $\formule$, then \emph{a fortiori} \Sys has a winning strategy with more than \cut processes for \formule when \Envi has \nE processes if and if \Sys has an $(\cut,\nE)$-winning strategy for $\formule$.
  \end{itemize}

  In all of these cases, the search for a winning strategy for $\formule$ can be limited to $[0,\cut]\times[0,\minind]$, which concludes the proof.
\end{proof}

\section{Undecidability results}
When \Sys and \Envi processes are partitioned, we have seen that when one is only allowed to check whether two positions belong to the same process, the synthesis problem is decidable for \FO. In this section, we show that as soon as we are able to compare the relative positions of two processes, this is no longer the case, even when restricting ourselves to the two-variable setting, and when having access only to one positional relations ($<$ or \plusone): both \PARTSYNTH[\FOtwoeqord] (Theorem~\ref{th:twoeqord}) and \PARTSYNTH[\FOtwoeqplusone] (Theorem~\ref{th:twoeqplusone}) are undecidable.

When processes are shared, the prospect is even darker: it was shown in~\cite{DBLP:conf/fossacs/BerardBLS20} that the synthesis problem in undecidable for $\FOeq$. We argue that this is already the case in the two-variable fragment: \SHAREDSYNTH[\FOtwoeq] is already undecidable (Theorem~\ref{th:twomixed}).

\subsection{Undecidability of \mdseries{\PARTSYNTH[\FOtwoeqord]} and \mdseries{\PARTSYNTH[\FOtwoeqplusone]}}

Let us start by considering the case where the only positional relation is the order. We show in the following that when \Sys and \Envi processes are separated, the synthesis problem is undecidable in this setting.

\begin{theorem}
  \label{th:twoeqord}
  \PARTSYNTH[\FOtwoeqord] is undecidable.
\end{theorem}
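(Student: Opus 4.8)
The plan is to reduce from the halting problem of two-counter (Minsky) machines, which is undecidable. Given such a machine $M$, I will construct an alphabet $\Alp=\AlpS\uplus\AlpE$ and a formula $\formule\in\FOtwoeqord$ such that \Sys has a winning strategy for $\formule$ (with $\ProcM=\emptyset$ and some finite $\ProcS,\ProcE$) if and only if $M$ halts. The game is engineered so that \Sys is forced to write down, as a data word, a faithful encoding of a halting run of $M$, while \Envi plays the role of a referee that may challenge any step it believes to be incorrect.

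The encoding represents a run of $M$ as a sequence of configuration blocks, ordered by $<$. Within each block, \Sys records the current control state and the transition being fired, via dedicated unary predicates of $\AlpS$ (using \States, \Trans, and the increment/decrement/zero-test labels \Transinc, \Transdec, \Transzero). A counter value is represented as the number of currently live units, where each unit is carried by one process of \ProcS. The crucial device is that a single unit persists across consecutive configurations \emph{as the same process}: its successive occurrences are therefore linked by $\sim$, so that the ``same unit in the next configuration'' relation --- which over plain words would require the successor \plusone --- is instead witnessed by the data itself. Incrementing counter $i$ amounts to \Sys playing \inc on a fresh process; decrementing amounts to playing \dec on a live unit, which then ceases to occur; a zero-test asserts that no unit of that counter is live.

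The role of \Envi is to localise the otherwise global consistency requirements into comparisons between just two positions, which is precisely what fits in \FOtwoeqord. Since processes are partitioned, \Envi challenges by playing its own ($\ProcE$) actions, using the order to single out the suspect step and the signalling actions $\koe,\oke$ (and symmetrically $\kos,\oks$, together with \noop) to open and resolve an accusation. The winning formula $\formule$ is then a Boolean combination of \FOtwoeqord sub-formulas: a well-formedness part ensuring the prefix encodes a legal sequence of configurations with matching source and target states (\formprefixS, \formbadsource, \formbadtarget, \formbadseq); a faithfulness part whose ``bad'' detectors each speak about a single $\sim$-linked or $<$-separated pair --- a persistent unit whose copy is missing or spuriously created (\formbadupkeep), or a unit surviving an alleged zero-test (\formbadzerotest); the clauses \formplayafterkoS and \formplayafterkoE that assign blame to whoever is caught cheating once an accusation is raised; and a liveness part demanding that \Sys eventually reach the halting state of $M$. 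The key point is that each detector only ever needs to talk about the two positions pinpointed by an accusation, together with their labels, their $\sim$-relation and their $<$-relation, so two reusable variables suffice.

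Correctness then splits into two directions. If $M$ halts, \Sys plays the honest encoding of the (unique) halting run; since \Sys never cheats, every accusation of \Envi is spurious, the ``bad'' detectors are all false, and \Sys reaches the halt state, so $\formule$ holds on every fair, \Sys-compatible play and \Sys wins. Conversely, if \Sys has a winning strategy, the well-formedness and accusation clauses force the produced word to be a genuine, cheat-free encoding of a run (any local cheat would be exposed by the matching \Envi challenge, falsifying $\formule$), while the liveness clause forces it to terminate in the halting state; hence $M$ halts. I expect the main obstacle to be exactly the design that keeps every verification local to a single pair of positions while using only $<$ (and not \plusone) with two variables: the successor structure on blocks and the ``next copy of a unit'' must be recovered from the data through $\sim$ and from the order through the uniqueness of the accusation markers, rather than from positional arithmetic, and one must verify that this encoding cannot be subverted by \Envi's ability to delay or interleave its moves --- which is where the fairness assumption and the $\oks/\kos$, $\oke/\koe$ protocol are needed.
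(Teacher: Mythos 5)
Your proposal follows essentially the same route as the paper's proof: a reduction from the halting problem for two-counter Minsky machines in which \Sys transcribes the run step by step, a counter's value is the number of \Sys processes carrying an \inc but no \dec (matched through $\sim$), \Envi acts as a referee through the $\oks/\kos$ and $\oke/\koe$ accusation protocol (with fairness guaranteeing accusations can be raised), and the specification is a Boolean combination of local detectors, each speaking about pairs of positions compared by $\sim$ and $<$ so that two variables suffice. The paper additionally fixes $|\ProcE|=1$ and writes out every detector formula explicitly, but your encoding, protocol, and two-directional correctness argument coincide with its proof.
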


\begin{proof}
  We prove this theorem by reduction from the halting problem for two-counter Minsky machines.
  A \emph{two-counter Minsky machine} \mm has a finite set of states \States (containing an initial state \istate and an halting state \hstate), two counters \cntzero and \cntone and a set of transitions \Trans, which is partitioned into \[\Trans=\biguplus_{i=0,1}\Transinc\uplus\Transdec\uplus\Transzero\,,\] where $\Transinc\subseteq\States\times\States$ is the set of transitions incrementing counter \cnt, $\Transdec\subseteq\States\times\States$ is the set of transitions decrementing counter \cnt, and $\Transzero$ is the set of zero-test transitions on counter \cnt. A \emph{configuration} of \mm is a triple $(\state,v_0,v_1)$, where $\state\in\States$ indicates the current state of the machine and $v_i\in\N$ the value of counter $\cnt$. A \emph{run} of \mm is a sequence starting in the initial configuration $(\istate,0,0)$, and such that two successive configurations $(\state,v_0,v_1)$ and $(\statebis,v'_0,v'_1)$ satisfy the following condition: either $v'_i=v'_i+1$, $v'_{1-i}=v_{1-i}$ and $(\state,\statebis)\in\Transinc$, or $v'_i=v'_i-1$, $v'_{1-i}=v_{1-i}$ and $(\state,\statebis)\in\Transdec$, or $v'_i=v_i=0$, $v'_{1-i}=v_{1-i}$ and $(\state,\statebis)\in\Transzero$. A run is \emph{halting} if it is finite and ends in some configuration $(\hstate,v_0,v_1)$ for any $v_i\in\N$. It is undecidable, on input \mm, to tell whether such a halting run exists \cite{minsky1967computation}.

  Given such a machine \mm, we exhibit a formula \formmm, computable from \mm, with the following property: there exists a halting run for \mm iff \Sys has an $(\nS,1)$-winning strategy for \formmm for some $\nS\in\N$. This proves the Theorem.

  We consider the signature $\sig:=\sigE\uplus\sigS$ where $\sigE:=\{\oke,\koe\}$ and \[\sigS:=\{\inc[0],\dec[0],\inc[1],\dec[1],\noop,\oks,\kos\}\uplus\States\uplus\Trans\,.\]

  Let us start by giving an example of a data word encoding a halting run. Suppose that $\States:=\{\istate,\stateone,\statetwo,\hstate\}$ and $\Trans:=\{\transzero,\transone,\transtwo\}$ where
  \[
  \begin{cases}
    \transition[\transzero][\istate][\istate][\cntzero++]\\
    \transition[\transone][\istate][\stateone][\cntzero--]\\
    \transition[\transtwo][\stateone][\statetwo][\cntzero--]\\
    \transition[\transthree][\statetwo][\hstate][\cntzero==0]
  \end{cases}
  \]
  The halting run ($\transzero\cdot\transzero\cdot\transone\cdot\transtwo\cdot\transthree$) of \mm could be represented as the following data word, where we denote \sstyle{\Sys}'s processes by integers, and \estyle{\Envi}'s process as \eproc. Here, \Sys plays mainly on process $0$; they could have mixed their play, as the only time process identity matters is when playing $\inc$ or $\dec$.
  \[
  \begin{aligned}
    \sstyle{(0,\oks)}\estyle{(\eproc,\oke)}&\sstyle{(0,\istate)(0,\transzero)(0,\inc[0])(0,\oks)}\estyle{(\eproc,\oke)}\\
    &\sstyle{(0,\istate)(0,\transzero)(1,\inc[0])(0,\oks)}\estyle{(\eproc,\oke)}\\
    &\sstyle{(0,\istate)(0,\transone)(0,\dec[0])(0,\oks)}\estyle{(\eproc,\oke)}\\
    &\sstyle{(0,\stateone)(0,\transtwo)(1,\dec[0])(0,\oks)}\estyle{(\eproc,\oke)}\\
    &\sstyle{(0,\statetwo)(0,\transthree)(0,\noop)(0,\oks)}\estyle{(\eproc,\oke)}\sstyle{(0,\hstate)}
  \end{aligned}
  \]

  Let us inspect this data word step-by-step. The first two letters of the data word representing a valid run are always an \oks by \Sys followed by an \oke by \Envi.

  Following these two letters, we get a succession of the pattern \[\sstyle{(\_,\state)(\_,\trans)(\_,l)(\_,\oks)}\estyle{(\_,\oke)}\]
  where $\state\in\States, \trans\in\Trans$ and $l$ is either \noop, an \inc or a \dec. Eventually, the data words stops in the halting state \hstate.

  Notice how the value of \cntzero is encoded during this run: at any point during the run, the value of \cntzero
  is equal to the number of \Sys processes on which an \inc[0] has been played, but no \dec[0]. Thus, following a transition in \Transinc, \formmm will force \Sys to play an \inc on a new process. Similarly, after each transition in \Transdec, in order not to lose immediately, \Sys will be forced to play a \dec on a process on which an \inc has been played, but no \dec yet. When the transition is in \Transzero, \Sys must play a \noop on any process; and \formmm grants \Envi a immediate win if \cnt is not zero - that is, if there exists a process on which an \inc and no \dec have been played.

  Let us now explain how players are prevented from cheating to their advantage. The mechanism of fraud prevention is always the same: if \Sys cheats, then \Envi immediately responds by playing \koe, and on the other way around, \Sys plays \oks when detecting a fraud from \Envi. Note that there is by definition an asymmetry between the players, and that the \kos can be differed by the \Envi; but the fairness assumption guarantees that \Sys will be able to output their \kos after some time. Once a \koe or \kos has been played, both players are encouraged to stop (a player who plays after a \ko loses the game), and \formmm checks whether the \ko is justified. If the other player indeed was cheating, then the player who \ko'ed wins, otherwise they lose.

  \Envi can cheat in only three ways:
  \begin{enuE}
  \item\label{enu:e_prefix} by not respecting the prefix $(\_,\oks)(\_,\oke)$
  \item\label{enu:e_multi_ok} by playing multiple \oke in a row
  \item\label{enu:e_early_ok} by playing an \oke before their turn, i.e. before \Sys has played their \oks
  \end{enuE}

  Let us now described the ways \Sys can cheat:
  \begin{enuS}
  \item\label{enu:s_prefix} by not respecting the prefix $(\_,\oks)(\_,\oke)$
  \item\label{enu:s_order} by not respecting the order or number of their letters in a pattern ($\state\to\trans\to\inc/\dec/\noop\to\oks$)
  \item\label{enu:s_target_state} by playing some $\state\in\States$ which is not the end-state of the last transition (or $\istate$ if it is the first state)
  \item\label{enu:s_source_state} by playing some $\trans\in\Trans$ which does not start in the previous state
  \item\label{enu:s_upkeep} by playing an \inc, \dec or \noop which does not match the previous transition
  \item\label{enu:s_double_inc} by playing \inc on a process on which an \inc has already been played
  \item\label{enu:s_double_dec} by playing \dec on a process on which an \dec has already been played
  \item\label{enu:s_dec_no_inc} by playing \dec on a process on which no \inc has been played
  \item\label{enu:s_zero} by playing some $\trans\in\Transzero$ while \cnt is non-zero
  \end{enuS}

  On top of that, \formmm ensures a player loses the game if they refuse to play when their turn comes.
  
  The following notations will be useful:
  \[
  \begin{cases}
    \isstate&:=\quad\bigvee_{\state\in\States}\ \state(x)\\
    \istrans&:=\quad\bigvee_{\trans\in\Trans}\ \trans(x)\\
    \istransinc&:=\quad\bigvee_{\trans\in\Transinc}\ \trans(x)\\
    \istransdec&:=\quad\bigvee_{\trans\in\Transdec}\ \trans(x)\\
    \istranszero&:=\quad\bigvee_{\trans\in\Transzero}\ \trans(x)\\
    \isupkeep&:=\quad\noop(x)\ \lor\ \bigvee_{i=0,1}\ \inc(x)\lor\dec(x)\\
    \issys&:=\quad\bigvee_{a\in\sigS}a(x)\\
    \isenv&:=\quad\bigvee_{a\in\sigE}a(x)
  \end{cases}
  \]

  The following formula \formkos is satisfied when there is an irregularity on \Envi's part since \Sys last played an \oks (either \ref{enu:e_multi_ok} or \ref{enu:e_early_ok}). Thus, \Sys wins if they play \kos while \formkos is satified.

  \[
  \begin{aligned}
    \formkos:=\exists x,&\ \oks(x)\land\big(\forall y,x<y\to\neg\oks(y)\big) \\
    &\land\exists y, x<y\land\oke(y)\land\Big[\big(\exists x, y<x\land\oke(y)\big)\lor \neg\big(\exists y,x<y\land\oks(y)\big)\Big]
  \end{aligned}  
  \]

  Conversely, \formkoe holds when \Sys has cheated since \Envi last played an \oke. Thus, \formmm will ensure that \Envi wins if they output a \koe while \formkoe holds. Due to the number of ways for \Sys to cheat, we introduce subformulas to cover each case.

  \[
  \begin{aligned}
    \formkoe:=\exists x,&\ \oke(x)\land\big(\forall y,x<y\to\neg\oke(y)\big)\\
    &\land\Big[\formbadseq\lor\formbadtarget\lor\formbadsource\lor\formbadupkeep\lor\formbadzerotest\Big]
  \end{aligned}
  \]
  where
  \begin{itemize}
  \item \formbadseq covers \ref{enu:s_order}
  \item \formbadtarget covers \ref{enu:s_target_state}
  \item \formbadsource covers \ref{enu:s_source_state}
  \item \formbadupkeep covers \ref{enu:s_upkeep}, \ref{enu:s_double_inc}, \ref{enu:s_double_dec} and \ref{enu:s_dec_no_inc}
  \item \formbadzerotest covers \ref{enu:s_zero}.
  \end{itemize}

  \[
  \begin{aligned}
    \formbadseq:=&\quad\exists y>x,\ 
    \begin{aligned}[t]
      \Big[\quad&\isstate[y]\ \land\ \exists x>y,\isstate\\
        \lor\ &\istrans[y]\ \land\ \exists x>y,\big(\isstate\lor\istrans\big)\\
        \lor\ &\isupkeep[y]\ \land\ \exists x>y,\big(\isstate\lor\istrans\lor\isupkeep\big)\\
        \lor\ &\oks(y)\ \land\ \exists x>y,\issys\quad\Big]        
    \end{aligned}\\
    &\lor\ \big(\exists y>x,\ \issys[y]\big)\ \land\ \neg\big(\exists y>x,\ \isstate[y]\big)\\ 
    &\lor\ \big(\exists y>x,\ \isupkeep[y]\ \lor\ \oks(y)\big)\ \land\ \neg\big(\exists y>x,\ \istrans[y]\big)\\
    &\lor\ \big(\exists y>x,\ \oks(y)\big)\ \land\ \neg\big(\exists y>x,\ \isupkeep[y]\big)\\
    \formbadtarget:=&\ \neg\big(\exists y>x,\ \istrans[y]\big)\ \land\ \bigvee_{\state\in\States}\Big[\big(\exists y>x,\ \state(y)\big)\ \land\ \exists x,\big([\forall y>x,\ \neg\istrans[y]]\ \land\ \bigvee_{\substack{\trans\in\Trans\text{ doesn't}\\\text{end in }\state}}\trans(x)\big)\Big]\\
    &\lor\ \neg\big(\exists y,\ \istrans[y]\big)\ \land\ \bigvee_{\state\in\States\setminus\{\istate\}}\exists y,\ \state(y)\\
    \formbadsource:=&\bigvee_{\state\in\States}\quad\bigvee_{\substack{\trans\in\Trans\text{ doesn't}\\\text{start in }\state}}\Big[\big(\exists y>x,\ \state(y)\big)\ \land\ \big(\exists y>x,\ \trans(y)\big)\Big]\\
    \formbadupkeep:=&\bigvee_{i=0,1}
    \begin{aligned}[t]
      \Big[\quad&\big(\exists y>x,\ \istransinc[y]\big)\ \land\ \big(\exists y>x,\ \isupkeep[y]\land\neg\inc(y)\big)\\
        \lor\ &\big(\exists y>x,\ \istransdec[y]\big)\ \land\ \big(\exists y>x,\ \isupkeep[y]\land\neg\dec(y)\big)\\
        \lor\ &\big(\exists y>x,\ \istranszero[y]\big)\ \land\ \big(\exists y>x,\ \isupkeep[y]\land\neg\noop(y)\big)\\
        \lor\ &\exists y>x,\ \big(\inc(y)\ \land\ \exists x<y,\ [x\sim y\ \land\ \inc(x)] \big)\\
        \lor\ &\exists y>x,\ \big(\dec(y)\ \land\ \exists x<y,\ [x\sim y\ \land\ \dec(x)] \big)\\
        \lor\ &\exists y>x,\ \big(\dec(y)\ \land\ \neg(\exists x<y,\ [x\sim y\ \land\ \inc(x)]) \big)\quad\Big]
    \end{aligned}\\
    \formbadzerotest:=&\bigvee_{i=0,1}\Big[\big(\exists y>x,\ \istranszero[y]\big)\ \land\ \exists x,\big[\inc(x)\ \land\ \neg(\exists y,\ y\sim x\ \land\ \dec(x))\big]\Big]
  \end{aligned}
  \]

  It only remains to cover cases \ref{enu:e_prefix} and \ref{enu:s_prefix}, with \formprefixE and \formprefixS, as well as the case where a players refuses to play when their turn comes, covered by \formblockE and \formblockS, and the immediate loss when a players keeps playing after a \ko, with \formplayafterkoE and \formplayafterkoS. We use \first, \second and \last to respectively denote the first, second and last element wrt. $<$, which are obviously definable in \FOtwoeqord.

  \[
  \begin{aligned}
    \formprefixE:=&\quad\isenv[\first]\ \lor\ \big[\isenv[\second]\land\neg\oke(\second)\big]\\
    \formprefixS:=&\quad\big[\issys[\first]\land\neg\oks(\first)\big]\ \lor\ \issys[\second]\\
    \formblockE:=&\quad\oks(\last)\\
    \formblockS:=&\quad(\neg\exists x,\top)\ \lor\ \koe(\last)\ \lor\ \big[\isstate[\last]\land\neg\hstate(\last)\big]\ \lor\ \istrans[\last]\ \lor\ \isupkeep[\last]\\
    \formplayafterkoE:=&\quad\isenv[\last]\ \land\ \exists x, x\neq\last\land \big[\koe(x)\lor\kos(x)\big]\\
    \formplayafterkoS:=&\quad\issys[\last]\ \land\ \exists x, x\neq\last\land \big[\koe(x)\lor\kos(x)\big]\\\\
  \end{aligned}
  \]

  We are now ready to make \formmm explicit. We want to \formmm hold either when \Envi has made a misplay (either by satisfying \formprefixE, \formblockE or \formplayafterkoE, or by making a move after which \Sys can play \kos and satisfy \formkos), or if \Sys has made no misplay (in which case the data word represents a valid run) and this run ends in \hstate:
  \[
  \begin{aligned}
    \formmm:=&\quad\formprefixE\ \lor\ \formblockE\ \lor\ \formplayafterkoE\ \lor\ \big[\exists x,\kos(x)\land\formkos\big]\\
    &\lor\Big(\neg\formprefixS\ \land\ \neg\formblockS\ \land\ \neg\formplayafterkoS\ \land\ \neg\big[\exists x,\koe(x)\land\formkoe\big]\ \land\ \hstate(\last)\Big)
  \end{aligned}
  \]

  It remains to explain why \Sys having a winning strategy for \formmm amounts exactly to the existence of a halting run in \mm.
  First, consider a halting run of \mm. It is straighforward to see that \Sys has an $(\nS,1)$-winning strategy for \formmm, where \nS is the total number of transition of $\bigcup_{i=0,1}\Transinc$ used in the run (or $\nS=1$ if no such transitions are used): start by playing \oks on any process, and wait for \Envi to play \oke on their process. Once this is done, play according to the run, in the sequence $\state\to\trans\to\inc/\dec/\noop\to\oks$ (taking care to play \inc on a new process each time, and \dec on a process which has an \inc but no \dec), and after each such sequence wait for the acknowledgement from \Envi, in the form of an \oke. Then stop when reaching \hstate. If \Envi blocks, block as well; if \Envi deviates by playing \oke too soon, play \kos as soon as possible and then block; if \Envi plays \koe, block. It is easily shown by induction on the length of the run that such a strategy is winning for \formmm, and indeed require only $\nS$ processes.

  Conversely, suppose that \Sys has an $(\nS,1)$-winning strategy for \formmm, for some $\nS\in\N$. We argue this entails the existence of a halting run in \mm. Let us inspect the data word \dw produced by this winning strategy against an \Envi which respects the aforementioned rules (namely, which waits their turn to play \oke, unless \Sys cheats, i.e. if the current data word satisfies \formkoe, in which case \Envi plays a \koe and stops). Since \Envi respects the format and makes no false allegation, \formmm cannot be satisfied because of the subformulas \formprefixE, \formblockE, \formplayafterkoE or $\exists x,\kos(x)\land\formkos$. Hence, the second part of \formmm must hold: \formprefixS, \formblockS and  \formplayafterkoS are false, no \koe was played, and $\hstate(\last)$ holds: \Sys's part of \dw must start with an \oks, followed by a sequence of patterns $\state\to\trans\to\inc/\dec/\noop\to\oks$ interspered with \oke from \Envi, until it reaches a position marked with \hstate. Since \Envi played no \koe, at no point was \formkoe satisfied, which means that one can reproduce in a run of \mm the sequence of states and transitions played by \Sys in \dw: by induction on the number of patterns $\state\to\trans\to\inc/\dec/\noop\to\oks$ played by \Sys, one shows that there is a corresponding run in \mm whose last configuration is $(\state,v_0,v_1)$, where \state is the last letter from \States played by \Sys and $v_i$ is the number of processes from \ProcS marked with one \inc and no \dec. The validity of the transitions from $\Transdec$ and $\Transzero$ comes from the fact \formbadupkeep and \formbadzerotest were never satisfied, for otherwise \Envi would have immediately played a \koe. Considering the whole of \dw, we get a run of \mm ending in \hstate, thus concluding the proof.
  \qed
\end{proof}

Let us now argue that this problem remains undecidable if one has access to the successor relation on positions, rather that to the order itself:

\begin{theorem}
  \label{th:twoeqplusone}
  \PARTSYNTH[\FOtwoeqplusone] is undecidable.
\end{theorem}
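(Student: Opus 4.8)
The plan is to follow the proof of Theorem~\ref{th:twoeqord} almost verbatim: reduce from the halting problem for two-counter Minsky machines, keep the same signature, and encode a run of \mm as the same data word --- the prefix $\oks\,\oke$ followed by a sequence of blocks of shape $\state\to\trans\to\inc/\dec/\noop\to\oks$, each acknowledged by an \oke, stopping once \hstate is reached. The only thing that changes is the formula \formmm: every occurrence of the order $<$ must be re-expressed using the successor relation $\plusone$ alone. The entire difficulty therefore reduces to re-implementing \formkoe, \formkos and the subformulas \formbadseq, \formbadtarget, \formbadsource, \formbadupkeep and \formbadzerotest in \FOtwoeqplusone; once this is done, the correctness argument (a halting run of \mm exists iff \Sys has an $(\nS,1)$-winning strategy for \formmm) is unchanged.

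This is possible because the encoding was designed to be local. Each block has a fixed bounded length, and every consistency constraint compares a position only to others lying in the same block or in the one immediately preceding it, so each check inspects a window of bounded width. Although only two variables are available, for any fixed $n$ and any property $P$ one can express in \FOtwoeqplusone that the $n$-th $\plusone$-successor of the current position satisfies $P$, by chaining $\plusone$ and reusing the two variables \var and \varbis; hence any bounded navigation through the word is definable. The landmarks \first, \second and \last also remain definable: \first and \last are the unique positions with no $\plusone$-predecessor and no $\plusone$-successor, and \second is the $\plusone$-successor of \first. Finally, the genuinely arithmetic tests --- a second \inc or \dec on a process, a \dec with no matching \inc, and a zero-test transition while the counter is nonzero --- already use only $\sim$ in Theorem~\ref{th:twoeqord}, the order serving merely to locate the relevant transition inside the current block; they transfer once that in-block anchor is localized as above.

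The one genuinely non-local pattern in Theorem~\ref{th:twoeqord} is the reference to ``the last \oke'' (resp. ``the last \oks'') together with a quantification over all later positions, which has no $\plusone$ counterpart. The plan is to drop this anchoring and instead detect each fraud as a bounded-window pattern wherever it occurs: for \formkoe, \Sys's deviations are each witnessed either by a local misconfiguration inside a bounded window (wrong order in a block, a state incompatible with the preceding transition, an upkeep action incompatible with its transition) or by a global $\sim$-pattern (double \inc/\dec, \dec without \inc, nonzero counter at a zero-test); similarly \formkos reduces to two purely local $\plusone$-patterns, namely two consecutive \oke and an \oke not immediately preceded by an \oks. Against the honest \Envi used in the converse direction --- which reacts block-by-block, playing \koe as soon as \Sys deviates --- the offending positions always sit a bounded number of $\plusone$-steps from \last, so the localized formulas fire exactly when the original ones did. \textbf{The main obstacle} is to verify that this bounded-window, anchor-free reformulation is both sound (it never flags honest play, so \Sys's honest strategy still beats every \Envi) and complete (it catches every deviation the original \formmm caught), while handling the fairness delay on \Sys's \kos; since the \Envi frauds that \kos reports are themselves local $\plusone$-patterns, that delay is harmless and does not enlarge the window to be inspected.
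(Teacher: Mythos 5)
Your proposal follows essentially the same route as the paper's own (sketched) proof: the paper's argument for Theorem~\ref{th:twoeqplusone} is precisely the observation that the encoding of Theorem~\ref{th:twoeqord} proceeds in blocks of bounded size, so that every suspected violation is witnessed either in a bounded window of positions next to the move reporting it or by an order-free, globally visible pattern, which allows all formulas to be rewritten with $\plusone$ in place of $<$. Your added details --- chaining $\plusone$ with the two reusable variables for bounded navigation, redefining \first, \second and \last, treating the counter frauds as pure $\sim$-patterns, and noting that the fairness delay on \kos is harmless because the \Envi frauds it reports are persistent, locally checkable $\plusone$-patterns --- are correct elaborations of exactly that sketch.
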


First of all, note that Theorems~\ref{th:twoeqord} and \ref{th:twoeqplusone} are not derivable from one another, as \FOtwoeqord and \FOtwoeqplusone have orthogonal expressive power: with only two variables, the successor relation is not definable from the order.

\begin{proof}[Sketch of proof]
  The main idea is to note that in the proof of Theorem~\ref{th:twoeqord}, the encoding of a run follows a sequence of patterns of size 5 ($\state\to\trans\to\inc/\dec/\noop\to\oks\to\oke$). Furthermore, as soon as an \oke or \oks is issued, the suspected violation can be found in the positions immediately preceding it (in the case of \oke) or is easily seen globally (in the case of an \oks issued to report an instance of \ref{enu:e_multi_ok} or \ref{enu:e_early_ok}, which can be delayed for arbitraily long by \Envi). As a consequence of this locality, all the formulas in the proof can be reformulated with \plusone instead of $<$.
  \qed
\end{proof}

\subsection{Undecidability of \mdseries{\SHAREDSYNTH[\FOtwoeq]}}\label{sec:undec-synth-mixed}

Let us now turn to the case where processes are shared between both players. In this context, not much can be done: \mdseries{\SHAREDSYNTH[\FOeq]} was shown to be undecidable in \cite{DBLP:conf/fossacs/BerardBLS20}.
As we now show, this problem is already undecidable when we restrict ourselves to two variables.

\begin{theorem}
  \label{th:twomixed}
  \SHAREDSYNTH[\FOtwoeq] is undecidable.
\end{theorem}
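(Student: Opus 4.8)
The plan is to prove undecidability of \SHAREDSYNTH[\FOtwoeq] by reduction from the halting problem for two-counter Minsky machines, reusing the overall architecture of the proof of Theorem~\ref{th:twoeqord} but adapting it to the two new constraints imposed by the shared setting: there is no positional order or successor available (only $\sim$), and processes are no longer partitioned, so \Sys and \Envi may both act on every process. The key observation is that in the shared case we gain something we previously lacked---both players can manipulate the \emph{same} processes---which is precisely what will let us compensate for the loss of the order $<$. The intended property is, as before, that a computable formula \formmm satisfies: there exists a halting run of \mm iff \Sys has a $|\ProcM|$-winning strategy for \formmm for some $|\ProcM|\in\N$.

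First I would fix how the value of each counter is encoded. Since there is no order on positions, I cannot rely on the temporal sequencing of \inc/\dec events; instead I would encode counter values \emph{structurally} in the processes themselves, exploiting that a process can be distinguished by the multiset of actions that both players have played on it. Concretely, I would have \Sys maintain the value of counter \cnt as the number of shared processes currently carrying a certain marker action, and use the shared nature of processes to let \Envi ``witness'' or ``challenge'' the contents of a process: because \Envi can now play on any process, \Envi can point at a specific process to accuse \Sys of miscounting (e.g. claiming a process should or should not be counted), and \Sys must be able to refute or confirm using only the $\sim$ predicate, which groups all positions of one process together. The fraud-detection machinery (\koe, \kos, and the block/after-\ko formulas) would be reorganised so that every accusation concerns the action-content of a single $\sim$-class, which is exactly what two variables and $\sim$ can inspect: $\exists x\,\exists y\,(x\sim y\land\dots)$.

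The central difficulty, and the main obstacle, is enforcing the \emph{sequencing} of the simulation---that configurations follow one another in a well-formed order of states and transitions---without access to $<$ or \plusone. In Theorem~\ref{th:twoeqord} the pattern $\state\to\trans\to\inc/\dec/\noop\to\oks\to\oke$ was policed using the order; here I would instead offload this bookkeeping onto the turn-alternation structure of the game itself and onto auxiliary marker actions played on \emph{dedicated} shared processes that act as a ``clock'' or ``phase pointer.'' The plan is to let the progression of the simulation be driven by \Envi's acknowledgements: each round, \Sys must advance the phase marker on a distinguished process, and \Envi may immediately challenge (via \koe) any process whose action-content is inconsistent with the claimed phase. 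The correctness argument then mirrors the previous proof: against an honest \Envi that only challenges genuine violations, a \formmm-winning \Sys strategy is forced to produce a faithful, in-order simulation terminating in \hstate, from which a halting run is read off; conversely a halting run yields an explicit winning strategy. Verifying that each of the finitely many violation types is $\FOtwoeq$-definable using only $\sim$ (in particular that detecting a miscounted counter reduces to quantifying over a single $\sim$-class) is where the bulk of the technical care lies, and is the step I expect to be the crux of the argument.
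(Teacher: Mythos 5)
There is a genuine gap, and it is precisely at the point you yourself flag as the crux. Your sequencing mechanism --- a dedicated ``clock'' process whose phase marker \Sys must advance each round, with \Envi allowed to ``immediately'' challenge any process inconsistent with the ``claimed phase,'' and with play ``after'' a \koe punished --- presupposes that the formula can observe temporal relationships. In \FOtwoeq there is no $<$ and no $\plusone$: a data word, as seen by the logic, is nothing more than a set of positions labelled by actions and partitioned into $\sim$-classes, so the formula can only inspect the \emph{final multiset} of actions carried by each process, with no trace whatsoever of the order in which they were played. Notions such as ``the marker current when this action was played,'' ``immediately,'' or ``after the \koe'' are simply not definable; the entire fraud-detection apparatus of Theorem~\ref{th:twoeqord} (\formkos, \formkoe, \last, ``since \Sys last played \oks'') is built on $<$ and has no analogue here. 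Any correct argument must instead arrange matters so that every deviation leaves a permanent, order-free footprint in the final process profiles, with alternation enforced purely through game-dynamic threats whose outcomes are visible in those profiles. This is exactly how the undecidability proof for \SHAREDSYNTH[\FOeq] (Theorem 17 of \cite{DBLP:conf/fossacs/BerardBLS20}) works --- increment: a fresh process receives two $a$'s and two $b$'s; decrement: that same process receives two more of each and is never touched again --- and the paper's proof of the present theorem adapts \emph{that} proof, not the proof of Theorem~\ref{th:twoeqord}. Note also that your idea of encoding a counter as the number of processes ``currently carrying a marker'' runs into the fact that actions can never be removed, so decrementing must be encoded by \emph{adding} further actions that change the profile, which is what forces the counting of occurrences in the first place.

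The second gap is that you never confront the actual two-variable obstacle, which is where the paper's entire contribution lies. In the \FOeq construction above, reading off the counters requires distinguishing processes with two $a$'s and two $b$'s from processes with four of each, i.e.\ counting occurrences of a letter within one $\sim$-class up to four --- and with only two variables one cannot express ``at least four $a$'s on this process.'' The paper's fix is a concrete trick: split $a$ into $a_1,a_2$ and $b$ into $b_1,b_2$, re-encode increments as one occurrence of each of $a_1,a_2,b_1,b_2$ and decrements as a second occurrence of each, so that every property needed counts each letter only up to two; this \emph{is} expressible with two variables, e.g.\ by conjoining $\varphi_{|a_1|\geq 2}(x)\land\varphi_{|a_2|\geq 2}(x)$ where $\varphi_{|a_1|\geq 2}$ reuses its variables. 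Your sketch explicitly defers this point (``the bulk of the technical care''), so as it stands the proposal identifies the right difficulties but resolves neither of them.
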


\begin{proof}[Sketch of proof]
We give a guideline to adapt the proof of undecidability of \SHAREDSYNTH[\FOeq] (\cite{DBLP:conf/fossacs/BerardBLS20}, Theorem 17) to the case where only two variables are available.
That proof was a reduction from the halting problem for two-counter Minsky machines, but in a way that is quite different from the reduction from the proof of Theorem~\ref{th:twoeqord}: the idea was to encode the value of each counter as the number of processes having a particular number of actions played by both System and Environment.
Namely, to increment a counter, System and Environment would cooperate to pick a fresh process and perform two $a$'s (from System) and two $b$'s (from Environment) on it.
To decrement the same counter, one such process would further receive two additional $a$'s and $b$'s, and from this point on this particular process would never be involved again.
A limited kind of alternation enforced by the formula made sure System and Environment followed this construction; otherwise they immediately lost the game.

The winning condition in the previous proof was given as a condition on the locations of the corresponding parameterized vector game, but it is straightforward to go back to \FOeq formulas from those.
The number of variables needed depends on the largest value appearing in the count of the number of letters for a given process or in the count of the number of processes for a given location.
For instance, we need two different variables to express that a process has at least two $a$'s, or that it has exactly one $a$ (by stating it has at least one and not at least two): \[\varphi_{|a|\geq 2}(x) := \exists y,\ y\sim x\ \land\ a(y)\ \land\ \exists x,\ x\sim y\ \land x\neq y\ \land\ a(x)\,.\]
Similarly if we want to specify that there are at least two such processes then we need at least two different variables.

The counting of processes for a given location is not a problem for the proof in the case of \FOtwoeq, as we only check for the existence of exactly zero (one variable needed), exactly one (two variables needed), at least zero (no variable needed), and at least one (one variable needed) processes in a particular location.
The only hurdle lies in the counting of actions, as we need to count up to four occurrences for each letter (as seen above, four $a$'s and four $b$'s).
The trick to overcome this difficulty is to use multiple letters to play the same role, instead of just one: $a$ is split into $a_1$ and $a_2$, and $b$ into $b_1$ and $b_2$.
And indeed, although it was not possible in \FOtwo to express the existence of a process with four $a$'s or more, we can now use a formula stating the existence of a process with at least two $a_1$'s and two $a_2$'s with only two variables: $\varphi_{|a_1|\geq 2}(x)\land\varphi_{|a_2|\geq 2}(x)$. 
The incrementation of a counter is now encoded by playing exactly one $a_1$, one $a_2$, one $b_1$, and one $b_2$ (instead of exactly two $a$'s and two $b$'s) on a new process, and to decrement a counter, we double the count of each letter as was done previously.
We now only need to count the number of occurrences of each letter up to two, which can be done with two variables instead of four.
The rest of the proof is unchanged.
\end{proof}

\section{Conclusion}
In this paper, we have answered the questions left open in \cite{DBLP:conf/fossacs/BerardBLS20}. It appears that when positions between two processes can be compared, the synthesis problem quickly becomes undecidable. As a next step, it thus seems natural to consider the case of partitioned processes for an intermediate logic between \FOtwoeq and \FOtwoeqord: $\FOtwo[\lesssim]$, where one can compare only positions pertaining to the same process.

%
%
\bibliographystyle{splncs04}
\bibliography{biblio}

\end{document}